\newtheorem{theorem}{Theorem}
\newtheorem{axiom}[theorem]{Axiom}
\newtheorem{conjecture}[theorem]{Conjecture}
\newtheorem{corollary}[theorem]{Corollary}
\newtheorem{definition}[theorem]{Definition}
\newtheorem{example}[theorem]{Example}
\newtheorem{exercise}[theorem]{Exercise}
\newtheorem{lemma}[theorem]{Lemma}
\newtheorem{proposition}[theorem]{Proposition}
\newtheorem{remark}[theorem]{Remark}
\newtheorem{partial solution}[theorem]{Partial Solution}
\newenvironment{proof}[1][Proof]{\textbf{#1.} }{\ \rule{0.5em}{0.5em}}
\chardef\@x10\chardef\@xv60
\def\tcitime{
\def\@time{%
  \@minute\time\@hour\@minute\divide\@hour\@xv
  \ifnum\@hour<\@x 0\fi\the\@hour:%
  \multiply\@hour\@xv\advance\@minute-\@hour
  \ifnum\@minute<\@x 0\fi\the\@minute
  }}%
\def\QCTOpt[#1]#2{%
  \def\QCTOptB{#1}
  \def\QCTOptA{#2}
}
\def\QCTNOpt#1{%
  \def\QCTOptA{#1}
  \let\QCTOptB\empty
}
\def\Qct{%
  \@ifnextchar[{%
    \QCTOpt}{\QCTNOpt}
}
\def\QCBOpt[#1]#2{%
  \def\QCBOptB{#1}
  \def\QCBOptA{#2}
}
\def\QCBNOpt#1{%
  \def\QCBOptA{#1}
  \let\QCBOptB\empty
}
\def\Qcb{%
  \@ifnextchar[{%
    \QCBOpt}{\QCBNOpt}
}
\def\PrepCapArgs{%
  \ifx\QCBOptA\empty
    \ifx\QCTOptA\empty
      {}%
    \else
      \ifx\QCTOptB\empty
        {\QCTOptA}%
      \else
        [\QCTOptB]{\QCTOptA}%
      \fi
    \fi
  \else
    \ifx\QCBOptA\empty
      {}%
    \else
      \ifx\QCBOptB\empty
        {\QCBOptA}%
      \else
        [\QCBOptB]{\QCBOptA}%
      \fi
    \fi
  \fi
}
\def\GRAPHICSPS#1{%
 \ifcase\GRAPHICSTYPE
   \special{ps: #1}%
 \or
   \special{language "PS", include "#1"}%
 \fi
}%
\def\graffile#1#2#3#4{%
    \bgroup
    \leavevmode
    \@ifundefined{bbl@deactivate}{\def~{\string~}}{\activesoff}
    \raise -#4 \BOXTHEFRAME{%
        \hbox to #2{\raise #3\hbox to #2{\null #1\hfil}}}%
    \egroup
}%
\def\draftbox#1#2#3#4{%
 \leavevmode\raise -#4 \hbox{%
  \frame{\rlap{\protect\tiny #1}\hbox to #2%
   {\vrule height#3 width\z@ depth\z@\hfil}%
  }%
 }%
}%
\newif\ifwasdraft
\def\GRAPHIC#1#2#3#4#5{%
 \ifnum\draft=\@ne\draftbox{#2}{#3}{#4}{#5}%
  \else\graffile{#1}{#3}{#4}{#5}%
  \fi
 }%
\def\addtoLaTeXparams#1{%
    \edef\LaTeXparams{\LaTeXparams #1}}%
\newif\ifBoxFrame \BoxFramefalse
\newif\ifOverFrame \OverFramefalse
\newif\ifUnderFrame \UnderFramefalse
\def\BOXTHEFRAME#1{%
   \hbox{%
      \ifBoxFrame
         \frame{#1}%
      \else
         {#1}%
      \fi
   }%
}
\def\doFRAMEparams#1{\BoxFramefalse\OverFramefalse\UnderFramefalse\readFRAMEparams#1\end}%
\def\readFRAMEparams#1{%
 \ifx#1\end%
  \let\next=\relax
  \else
  \ifx#1i\dispkind=\z@\fi
  \ifx#1d\dispkind=\@ne\fi
  \ifx#1f\dispkind=\tw@\fi
  \ifx#1t\addtoLaTeXparams{t}\fi
  \ifx#1b\addtoLaTeXparams{b}\fi
  \ifx#1p\addtoLaTeXparams{p}\fi
  \ifx#1h\addtoLaTeXparams{h}\fi
  \ifx#1X\BoxFrametrue\fi
  \ifx#1O\OverFrametrue\fi
  \ifx#1U\UnderFrametrue\fi
  \ifx#1w
    \ifnum\draft=1\wasdrafttrue\else\wasdraftfalse\fi
    \draft=\@ne
  \fi
  \let\next=\readFRAMEparams
  \fi
 \next
 }%
\def\IFRAME#1#2#3#4#5#6{%
      \bgroup
      \let\QCTOptA\empty
      \let\QCTOptB\empty
      \let\QCBOptA\empty
      \let\QCBOptB\empty
      #6%
      \parindent=0pt%
      \leftskip=0pt
      \rightskip=0pt
      \setbox0 = \hbox{\QCBOptA}%
      \@tempdima = #1\relax
      \ifOverFrame
          \typeout{This is not implemented yet}%
          \show\HELP
      \else
         \ifdim\wd0>\@tempdima
            \advance\@tempdima by \@tempdima
            \ifdim\wd0 >\@tempdima
               \textwidth=\@tempdima
               \setbox1 =\vbox{%
                  \noindent\hbox to \@tempdima{\hfill\GRAPHIC{#5}{#4}{#1}{#2}{#3}\hfill}\\%
                  \noindent\hbox to \@tempdima{\parbox[b]{\@tempdima}{\QCBOptA}}%
               }%
               \wd1=\@tempdima
            \else
               \textwidth=\wd0
               \setbox1 =\vbox{%
                 \noindent\hbox to \wd0{\hfill\GRAPHIC{#5}{#4}{#1}{#2}{#3}\hfill}\\%
                 \noindent\hbox{\QCBOptA}%
               }%
               \wd1=\wd0
            \fi
         \else
            \ifdim\wd0>0pt
              \hsize=\@tempdima
              \setbox1 =\vbox{%
                \unskip\GRAPHIC{#5}{#4}{#1}{#2}{0pt}%
                \break
                \unskip\hbox to \@tempdima{\hfill \QCBOptA\hfill}%
              }%
              \wd1=\@tempdima
           \else
              \hsize=\@tempdima
              \setbox1 =\vbox{%
                \unskip\GRAPHIC{#5}{#4}{#1}{#2}{0pt}%
              }%
              \wd1=\@tempdima
           \fi
         \fi
         \@tempdimb=\ht1
         \advance\@tempdimb by \dp1
         \advance\@tempdimb by -#2%
         \advance\@tempdimb by #3%
         \leavevmode
         \raise -\@tempdimb \hbox{\box1}%
      \fi
      \egroup%
}%
\def\DFRAME#1#2#3#4#5{%
 \begin{center}
     \let\QCTOptA\empty
     \let\QCTOptB\empty
     \let\QCBOptA\empty
     \let\QCBOptB\empty
     \ifOverFrame 
        #5\QCTOptA\par
     \fi
     \GRAPHIC{#4}{#3}{#1}{#2}{\z@}
     \ifUnderFrame 
        \nobreak\par\nobreak#5\QCBOptA
     \fi
 \end{center}%
 }%
\def\FFRAME#1#2#3#4#5#6#7{%
 \begin{figure}[#1]%
  \let\QCTOptA\empty
  \let\QCTOptB\empty
  \let\QCBOptA\empty
  \let\QCBOptB\empty
  \ifOverFrame
    #4
    \ifx\QCTOptA\empty
    \else
      \ifx\QCTOptB\empty
        \caption{\QCTOptA}%
      \else
        \caption[\QCTOptB]{\QCTOptA}%
      \fi
    \fi
    \ifUnderFrame\else
      \label{#5}%
    \fi
  \else
    \UnderFrametrue%
  \fi
  \begin{center}\GRAPHIC{#7}{#6}{#2}{#3}{\z@}\end{center}%
  \ifUnderFrame
    #4
    \ifx\QCBOptA\empty
      \caption{}%
    \else
      \ifx\QCBOptB\empty
        \caption{\QCBOptA}%
      \else
        \caption[\QCBOptB]{\QCBOptA}%
      \fi
    \fi
    \label{#5}%
  \fi
  \end{figure}%
 }%
\def\makeactives{
  \catcode`\"=\active
  \catcode`\;=\active
  \catcode`\:=\active
  \catcode`\'=\active
  \catcode`\~=\active
}
   \gdef\activesoff{%
      \def"{\string"}
      \def;{\string;}
      \def:{\string:}
      \def'{\string'}
      \def~{\string~}
    }
\def\FRAME#1#2#3#4#5#6#7#8{%
 \bgroup
 \ifnum\draft=\@ne
   \wasdrafttrue
 \else
   \wasdraftfalse%
 \fi
 \def\LaTeXparams{}%
 \dispkind=\z@
 \def\LaTeXparams{}%
 \doFRAMEparams{#1}%
 \ifnum\dispkind=\z@\IFRAME{#2}{#3}{#4}{#7}{#8}{#5}\else
  \ifnum\dispkind=\@ne\DFRAME{#2}{#3}{#7}{#8}{#5}\else
   \ifnum\dispkind=\tw@
    \edef\@tempa{\noexpand\FFRAME{\LaTeXparams}}%
    \@tempa{#2}{#3}{#5}{#6}{#7}{#8}%
    \fi
   \fi
  \fi
  \ifwasdraft\draft=1\else\draft=0\fi{}%
  \egroup
 }%
\def\TEXUX#1{"texux"}
\long\def\QQQ#1#2{%
     \long\expandafter\def\csname#1\endcsname{#2}}%
\long\def\QQA#1#2{}%
\def\QTR#1#2{{\csname#1\endcsname #2}}
\def\EXPAND#1[#2]#3{}%
\def\NOEXPAND#1[#2]#3{}%
\def\LaTeXparent#1{}%
\def\ChildStyles#1{}%
\def\ChildDefaults#1{}%
\def\QTagDef#1#2#3{}%
  \providecommand{\UNICODE}[2][]{}
\def\QQfnmark#1{\footnotemark}
 \def\abstract{%
  \if@twocolumn
   \section*{Abstract (Not appropriate in this style!)}%
   \else \small 
   \begin{center}{\bf Abstract\vspace{-.5em}\vspace{\z@}}\end{center}%
   \quotation 
   \fi
  }%
   \def\registered{\relax\ifmmode{}\r@gistered
                    \else$\m@th\r@gistered$\fi}%
 \def\r@gistered{^{\ooalign
  {\hfil\raise.07ex\hbox{$\scriptstyle\rm\text{R}$}\hfil\crcr
  \mathhexbox20D}}}}{}%
\newdimen\theight
\def\Column{%
 \vadjust{\setbox\z@=\hbox{\scriptsize\quad\quad tcol}%
  \theight=\ht\z@\advance\theight by \dp\z@\advance\theight by \lineskip
  \kern -\theight \vbox to \theight{%
   \rightline{\rlap{\box\z@}}%
   \vss
   }%
  }%
 }%
\def\qed{%
 \ifhmode\unskip\nobreak\fi\ifmmode\ifinner\else\hskip5\p@\fi\fi
 \hbox{\hskip5\p@\vrule width4\p@ height6\p@ depth1.5\p@\hskip\p@}%
 }%
\def\miss{\hbox{\vrule height2\p@ width 2\p@ depth\z@}}%
\def\tcol#1{{\baselineskip=6\p@ \vcenter{#1}} \Column}  %
\def\newfmtname{LaTeX2e}
  \DeclareOldFontCommand{\rm}{\normalfont\rmfamily}{\mathrm}
  \DeclareOldFontCommand{\sf}{\normalfont\sffamily}{\mathsf}
  \DeclareOldFontCommand{\tt}{\normalfont\ttfamily}{\mathtt}
  \DeclareOldFontCommand{\bf}{\normalfont\bfseries}{\mathbf}
  \DeclareOldFontCommand{\it}{\normalfont\itshape}{\mathit}
  \DeclareOldFontCommand{\sl}{\normalfont\slshape}{\@nomath\sl}
  \DeclareOldFontCommand{\sc}{\normalfont\scshape}{\@nomath\sc}
\def\alpha{{\Greekmath 010B}}%
\def\beta{{\Greekmath 010C}}%
\def\gamma{{\Greekmath 010D}}%
\def\delta{{\Greekmath 010E}}%
\def\epsilon{{\Greekmath 010F}}%
\def\zeta{{\Greekmath 0110}}%
\def\eta{{\Greekmath 0111}}%
\def\theta{{\Greekmath 0112}}%
\def\iota{{\Greekmath 0113}}%
\def\kappa{{\Greekmath 0114}}%
\def\lambda{{\Greekmath 0115}}%
\def\mu{{\Greekmath 0116}}%
\def\nu{{\Greekmath 0117}}%
\def\xi{{\Greekmath 0118}}%
\def\pi{{\Greekmath 0119}}%
\def\rho{{\Greekmath 011A}}%
\def\sigma{{\Greekmath 011B}}%
\def\tau{{\Greekmath 011C}}%
\def\upsilon{{\Greekmath 011D}}%
\def\phi{{\Greekmath 011E}}%
\def\chi{{\Greekmath 011F}}%
\def\psi{{\Greekmath 0120}}%
\def\omega{{\Greekmath 0121}}%
\def\varepsilon{{\Greekmath 0122}}%
\def\vartheta{{\Greekmath 0123}}%
\def\varpi{{\Greekmath 0124}}%
\def\varrho{{\Greekmath 0125}}%
\def\varsigma{{\Greekmath 0126}}%
\def\varphi{{\Greekmath 0127}}%
\def\nabla{{\Greekmath 0272}}
\def\FindBoldGroup{%
   {\setbox0=\hbox{$\mathbf{x\global\edef\theboldgroup{\the\mathgroup}}$}}%
}
\def\Greekmath#1#2#3#4{%
    \if@compatibility
        \ifnum\mathgroup=\symbold
           \mathchoice{\mbox{\boldmath$\displaystyle\mathchar"#1#2#3#4$}}%
                      {\mbox{\boldmath$\textstyle\mathchar"#1#2#3#4$}}%
                      {\mbox{\boldmath$\scriptstyle\mathchar"#1#2#3#4$}}%
                      {\mbox{\boldmath$\scriptscriptstyle\mathchar"#1#2#3#4$}}%
        \else
           \mathchar"#1#2#3#4%
        \fi 
    \else 
        \FindBoldGroup
        \ifnum\mathgroup=\theboldgroup 
           \mathchoice{\mbox{\boldmath$\displaystyle\mathchar"#1#2#3#4$}}%
                      {\mbox{\boldmath$\textstyle\mathchar"#1#2#3#4$}}%
                      {\mbox{\boldmath$\scriptstyle\mathchar"#1#2#3#4$}}%
                      {\mbox{\boldmath$\scriptscriptstyle\mathchar"#1#2#3#4$}}%
        \else
           \mathchar"#1#2#3#4%
        \fi     	    
	  \fi}
\newif\ifGreekBold  \GreekBoldfalse
\let\SAVEPBF=\pbf
\def\pbf{\GreekBoldtrue\SAVEPBF}%
  \newcounter{equationnumber}  
  \def\mathletters{%
     \addtocounter{equation}{1}
     \edef\@currentlabel{\theequation}%
     \setcounter{equationnumber}{\c@equation}
     \setcounter{equation}{0}%
     \edef\theequation{\@currentlabel\noexpand\alph{equation}}%
  }
    \def\BibTeX{{\rm B\kern-.05em{\sc i\kern-.025em b}\kern-.08em
                 T\kern-.1667em\lower.7ex\hbox{E}\kern-.125emX}}}{}%
\def\AmS{{\protect\usefont{OMS}{cmsy}{m}{n}%
                A\kern-.1667em\lower.5ex\hbox{M}\kern-.125emS}}}{}%
\def\@@eqncr{\let\@tempa\relax
    \ifcase\@eqcnt \def\@tempa{& & &}\or \def\@tempa{& &}%
      \else \def\@tempa{&}\fi
     \@tempa
     \if@eqnsw
        \iftag@
           \@taggnum
        \else
           \@eqnnum\stepcounter{equation}%
        \fi
     \fi
     \global\tag@false
     \global\@eqnswtrue
     \global\@eqcnt\z@\cr}
\def\TCItag{\@ifnextchar*{\@TCItagstar}{\@TCItag}}
\def\@TCItag#1{%
    \global\tag@true
    \global\def\@taggnum{(#1)}}
\def\@TCItagstar*#1{%
    \global\tag@true
    \global\def\@taggnum{#1}}
\def\dfrac#1#2{{\displaystyle {#1 \over #2}}}%
\def\QDATOP#1#2{{\displaystyle {#1 \atop #2}}}%
\let\DOTSI\relax
\def\RIfM@{\relax\ifmmode}%
\def\FN@{\futurelet\next}%
\def\iint{\DOTSI\intno@\tw@\FN@\ints@}%
\def\iiint{\DOTSI\intno@\thr@@\FN@\ints@}%
\def\iiiint{\DOTSI\intno@4 \FN@\ints@}%
\def\idotsint{\DOTSI\intno@\z@\FN@\ints@}%
\def\ints@{\findlimits@\ints@@}%
\newif\iflimtoken@
\newif\iflimits@
\def\findlimits@{\limtoken@true\ifx\next\limits\limits@true
 \else\ifx\next\nolimits\limits@false\else
 \limtoken@false\ifx\ilimits@\nolimits\limits@false\else
 \ifinner\limits@false\else\limits@true\fi\fi\fi\fi}%
\def\multint@{\int\ifnum\intno@=\z@\intdots@                          
 \else\intkern@\fi                                                    
 \ifnum\intno@>\tw@\int\intkern@\fi                                   
 \ifnum\intno@>\thr@@\int\intkern@\fi                                 
 \int}
\def\multintlimits@{\intop\ifnum\intno@=\z@\intdots@\else\intkern@\fi
 \ifnum\intno@>\tw@\intop\intkern@\fi
 \ifnum\intno@>\thr@@\intop\intkern@\fi\intop}%
\def\intic@{%
    \mathchoice{\hskip.5em}{\hskip.4em}{\hskip.4em}{\hskip.4em}}%
\def\negintic@{\mathchoice
 {\hskip-.5em}{\hskip-.4em}{\hskip-.4em}{\hskip-.4em}}%
\def\ints@@{\iflimtoken@                                              
 \def\ints@@@{\iflimits@\negintic@
   \mathop{\intic@\multintlimits@}\limits                             
  \else\multint@\nolimits\fi                                          
  \eat@}
 \else                                                                
 \def\ints@@@{\iflimits@\negintic@
  \mathop{\intic@\multintlimits@}\limits\else
  \multint@\nolimits\fi}\fi\ints@@@}%
\def\intkern@{\mathchoice{\!\!\!}{\!\!}{\!\!}{\!\!}}%
\def\plaincdots@{\mathinner{\cdotp\cdotp\cdotp}}%
\def\intdots@{\mathchoice{\plaincdots@}%
 {{\cdotp}\mkern1.5mu{\cdotp}\mkern1.5mu{\cdotp}}%
 {{\cdotp}\mkern1mu{\cdotp}\mkern1mu{\cdotp}}%
 {{\cdotp}\mkern1mu{\cdotp}\mkern1mu{\cdotp}}}%
\def\RIfM@{\relax\protect\ifmmode}
\def\text{\RIfM@\expandafter\text@\else\expandafter\mbox\fi}
\let\nfss@text\text
\def\text@#1{\mathchoice
   {\textdef@\displaystyle\f@size{#1}}%
   {\textdef@\textstyle\tf@size{\firstchoice@false #1}}%
   {\textdef@\textstyle\sf@size{\firstchoice@false #1}}%
   {\textdef@\textstyle \ssf@size{\firstchoice@false #1}}%
   \glb@settings}
\def\textdef@#1#2#3{\hbox{{%
                    \everymath{#1}%
                    \let\f@size#2\selectfont
                    #3}}}
\newif\iffirstchoice@
\def\Let@{\relax\iffalse{\fi\let\\=\cr\iffalse}\fi}%
\def\vspace@{\def\vspace##1{\crcr\noalign{\vskip##1\relax}}}%
\def\multilimits@{\bgroup\vspace@\Let@
 \baselineskip\fontdimen10 \scriptfont\tw@
 \advance\baselineskip\fontdimen12 \scriptfont\tw@
 \lineskip\thr@@\fontdimen8 \scriptfont\thr@@
 \lineskiplimit\lineskip
 \vbox\bgroup\ialign\bgroup\hfil$\m@th\scriptstyle{##}$\hfil\crcr}%
\def\Sb{_\multilimits@}%
\def\endSb{\crcr\egroup\egroup\egroup}%
\def\Sp{^\multilimits@}%
\newdimen\ex@
\def\rightarrowfill@#1{$#1\m@th\mathord-\mkern-6mu\cleaders
 \hbox{$#1\mkern-2mu\mathord-\mkern-2mu$}\hfill
 \mkern-6mu\mathord\rightarrow$}%
\def\leftarrowfill@#1{$#1\m@th\mathord\leftarrow\mkern-6mu\cleaders
 \hbox{$#1\mkern-2mu\mathord-\mkern-2mu$}\hfill\mkern-6mu\mathord-$}%
\def\leftrightarrowfill@#1{$#1\m@th\mathord\leftarrow
\mkern-6mu\cleaders
 \hbox{$#1\mkern-2mu\mathord-\mkern-2mu$}\hfill
 \mkern-6mu\mathord\rightarrow$}%
\def\overrightarrow{\mathpalette\overrightarrow@}%
\def\overrightarrow@#1#2{\vbox{\ialign{##\crcr\rightarrowfill@#1\crcr
 \noalign{\kern-\ex@\nointerlineskip}$\m@th\hfil#1#2\hfil$\crcr}}}%
\def\overleftarrow{\mathpalette\overleftarrow@}%
\def\overleftarrow@#1#2{\vbox{\ialign{##\crcr\leftarrowfill@#1\crcr
 \noalign{\kern-\ex@\nointerlineskip}$\m@th\hfil#1#2\hfil$\crcr}}}%
\def\overleftrightarrow{\mathpalette\overleftrightarrow@}%
\def\overleftrightarrow@#1#2{\vbox{\ialign{##\crcr
   \leftrightarrowfill@#1\crcr
 \noalign{\kern-\ex@\nointerlineskip}$\m@th\hfil#1#2\hfil$\crcr}}}%
\def\underrightarrow{\mathpalette\underrightarrow@}%
\def\underrightarrow@#1#2{\vtop{\ialign{##\crcr$\m@th\hfil#1#2\hfil
  $\crcr\noalign{\nointerlineskip}\rightarrowfill@#1\crcr}}}%
\def\underleftarrow{\mathpalette\underleftarrow@}%
\def\underleftarrow@#1#2{\vtop{\ialign{##\crcr$\m@th\hfil#1#2\hfil
  $\crcr\noalign{\nointerlineskip}\leftarrowfill@#1\crcr}}}%
\def\underleftrightarrow{\mathpalette\underleftrightarrow@}%
\def\underleftrightarrow@#1#2{\vtop{\ialign{##\crcr$\m@th
  \hfil#1#2\hfil$\crcr
 \noalign{\nointerlineskip}\leftrightarrowfill@#1\crcr}}}%
\def\qopnamewl@#1{\mathop{\operator@font#1}\nlimits@}
\let\nlimits@\displaylimits
\def\setboxz@h{\setbox\z@\hbox}
\def\varlim@#1#2{\mathop{\vtop{\ialign{##\crcr
 \hfil$#1\m@th\operator@font lim$\hfil\crcr
 \noalign{\nointerlineskip}#2#1\crcr
 \noalign{\nointerlineskip\kern-\ex@}\crcr}}}}
 \def\rightarrowfill@#1{\m@th\setboxz@h{$#1-$}\ht\z@\z@
  $#1\copy\z@\mkern-6mu\cleaders
  \hbox{$#1\mkern-2mu\box\z@\mkern-2mu$}\hfill
  \mkern-6mu\mathord\rightarrow$}
\def\leftarrowfill@#1{\m@th\setboxz@h{$#1-$}\ht\z@\z@
  $#1\mathord\leftarrow\mkern-6mu\cleaders
  \hbox{$#1\mkern-2mu\copy\z@\mkern-2mu$}\hfill
  \mkern-6mu\box\z@$}
\def\projlim{\qopnamewl@{proj\,lim}}
\def\injlim{\qopnamewl@{inj\,lim}}
\def\varinjlim{\mathpalette\varlim@\rightarrowfill@}
\def\varprojlim{\mathpalette\varlim@\leftarrowfill@}
\def\varliminf{\mathpalette\varliminf@{}}
\def\varliminf@#1{\mathop{\underline{\vrule\@depth.2\ex@\@width\z@
   \hbox{$#1\m@th\operator@font lim$}}}}
\def\varlimsup{\mathpalette\varlimsup@{}}
\def\varlimsup@#1{\mathop{\overline
  {\hbox{$#1\m@th\operator@font lim$}}}}
\def\align{\@verbatim \frenchspacing\@vobeyspaces \@alignverbatim
You are using the "align" environment in a style in which it is not defined.}
\let\csname endalign*\endcsname =\endtrivlist
\def\alignat{\@verbatim \frenchspacing\@vobeyspaces \@alignatverbatim
You are using the "alignat" environment in a style in which it is not defined.}
\let\csname endalignat*\endcsname =\endtrivlist
\def\xalignat{\@verbatim \frenchspacing\@vobeyspaces \@xalignatverbatim
You are using the "xalignat" environment in a style in which it is not defined.}
\let\csname endxalignat*\endcsname =\endtrivlist
\def\gather{\@verbatim \frenchspacing\@vobeyspaces \@gatherverbatim
You are using the "gather" environment in a style in which it is not defined.}
\let\csname endgather*\endcsname =\endtrivlist
\def\multiline{\@verbatim \frenchspacing\@vobeyspaces \@multilineverbatim
You are using the "multiline" environment in a style in which it is not defined.}
\let\csname endmultiline*\endcsname =\endtrivlist
\def\arrax{\@verbatim \frenchspacing\@vobeyspaces \@arraxverbatim
You are using a type of "array" construct that is only allowed in AmS-LaTeX.}
\def\tabulax{\@verbatim \frenchspacing\@vobeyspaces \@tabulaxverbatim
You are using a type of "tabular" construct that is only allowed in AmS-LaTeX.}
\let\csname endarrax*\endcsname =\endtrivlist
\let\csname endtabulax*\endcsname =\endtrivlist
 \def\endequation{%
     \ifmmode\ifinner 
      \iftag@
        \addtocounter{equation}{-1} 
        $\hfil
           \displaywidth\linewidth\@taggnum\egroup \endtrivlist
        \global\tag@false
        \global\@ignoretrue   
      \else
        $\hfil
           \displaywidth\linewidth\@eqnnum\egroup \endtrivlist
        \global\tag@false
        \global\@ignoretrue 
      \fi
     \else   
      \iftag@
        \addtocounter{equation}{-1} 
        \eqno \hbox{\@taggnum}
        \global\tag@false%
        $$\global\@ignoretrue
      \else
        \eqno \hbox{\@eqnnum}
        $$\global\@ignoretrue
      \fi
     \fi\fi
 } 
 \newif\iftag@ \tag@false
 \def\TCItag{\@ifnextchar*{\@TCItagstar}{\@TCItag}}
 \def\@TCItag#1{%
     \global\tag@true
     \global\def\@taggnum{(#1)}}
 \def\@TCItagstar*#1{%
     \global\tag@true
     \global\def\@taggnum{#1}}
     \def\tag{\@ifnextchar*{\@tagstar}{\@tag}}
     \def\@tag#1{%
         \global\tag@true
         \global\def\@taggnum{(#1)}}
     \def\@tagstar*#1{%
         \global\tag@true
         \global\def\@taggnum{#1}}
\begin{document}

\begin{center}
{\Large On Proof Theory in Computational Complexity\smallskip }

L. Gordeev, E. H. Haeusler, LC2021, July 24, 2021
\end{center}

\section{Background}

\begin{itemize}
\item  Earlier in 20th century logic proofs (deductions, derivations) were
understood like this: \emph{Formula $\varphi $ is derivable from axioms $%
\mathcal{A}$ (: $\mathcal{A}\vdash \varphi $) iff $\left( \exists \varphi
_{0},\cdots ,\varphi _{n}\right) \varphi =\varphi _{n}\,$\emph{and}}%
\thinspace \emph{$\left( \forall k\leq n\right) \varphi _{k}\in \mathcal{A\,}
$or\thinspace $\left( \exists i,j<k\right) \varphi _{j}=\varphi
_{i}\rightarrow \varphi _{k}$, i.e. $\varphi _{k\ }$follows from $\varphi
_{i}$ and $\varphi _{j}$ by the rule ``modus ponens'' (detachment)}. Other
rules of inference could be included analogously.

\item  Proofs in the algebraic logic (boolean and relation algebras)
were/are treated analogously with regard to the transitivity of ``=",
instead of \emph{modus ponens}.

\item  These definitions reflected traditional written \emph{linear}
presentation of mathematical proofs of new theorems via axioms, known
theorems, suitable lemmas, etc.

\item  Corresponding proof systems in mathematical logic are usually
referred to as Frege-Hilbert-Bernays-Tarski style calculi.
\end{itemize}

\begin{itemize}
\item  Later came graph-theoretic interpretations leading to genuine \emph{%
structural} proof theory.

\item  Corresponding basic proof systems are referred to as \emph{natural
deduction} (ND) and \emph{sequent calculus} (SC) -- both initiated by
Gentzen and further developed by Prawitz resp. Sch\"{u}tte, et al. These
proofs are usually presented in tree-like form, where branching points are
determined by the conclusions of the rules involved. Moreover

\begin{enumerate}
\item  ND derivations contain single formulas, whereas SC ones contain
finite collections thereof (called sequents).

\item  ND proofs have no axioms. However all assumptions shall be \emph{%
discharged} according to special conditions on the threads.
\end{enumerate}

\item  Both ND and SC allow \emph{normalizations} (mutually different)
making proofs more transparent and suitable for analysis.
\end{itemize}

\section{ Proof size}

\begin{itemize}
\item  Linear proofs admit tree-like interpretation, and v.v. Different
nodes in tree-like proofs might correspond to identical formulas
(``references'') $\varphi _{i}$, $\varphi _{j}$ occurring in linear proofs $%
\mathcal{A}\vdash \varphi $ (sequent case is analogous). So passing to
tree-like proofs might essentially increase the size of linear inputs. The
opposite direction is called \emph{proof compression}. Actually we compress
tree-like proofs into \emph{dag-like} proofs (\emph{dag} = directed acyclic
graph) by merging different nodes labeled with identical formulas
(sequents). Moreover
\end{itemize}

\begin{enumerate}
\item  Proof compression in SC is easy. However, we can't really control the
size of resulting dag-like proofs, as there can be too many different
(sub)sequents occurring in (even normal) proofs of given ``small''
conclusions.

\item  In contrast, ND proofs contain single formulas thus being more
appropriate for the size control. However, proof compression in ND is more
involved.
\end{enumerate}

\section{Minimal logic}

\begin{itemize}
\item  In this work we consider basic ND of minimal purely implicational
logic, \textsc{NM}$_{\rightarrow }$, having two standard rules of inferences 
\begin{equation*}
\fbox{$\left( \rightarrow I\right) :\dfrac{\QDATOP{\QDATOP{\left[ \alpha %
\right] }{\vdots }}{\beta }}{\alpha \rightarrow \beta }$}\quad \fbox{$\left(
\rightarrow E\right) :\dfrac{\alpha \quad \alpha \rightarrow \beta }{\beta \ 
}$ (= \emph{modus ponens})}
\end{equation*}
and auxiliary repetition rule 
\begin{equation*}
\fbox{$\left( R\right) :\dfrac{\alpha }{\alpha \ }$}
\end{equation*}
where $\left[ \alpha \right] $ in $\left( \rightarrow I\right) $\ indicates
that all $\alpha $-leaves occurring above $\beta $-node exposed are
considered \emph{discharged} assumptions.
\end{itemize}

\begin{definition}[\emph{minimal validity}]
A given (whether tree- or dag-like) \textsc{NM}$_{\rightarrow }$-deduction $%
\partial $ \emph{proves} its conclusion $\rho $ (abbr.: $\partial \vdash
\rho $) iff every maximal thread connecting the root labeled $\rho $ with a
leaf labeled $\alpha $ is \emph{closed}, i.e. it contains a $\left(
\rightarrow I\right) $ with conclusion $\alpha \rightarrow \beta $ and
discharged assumption $\alpha $, for some $\beta $. Now $\rho $ is \emph{%
valid in minimal logic} iff there exists a tree-like \textsc{NM}$%
_{\rightarrow }$-deduction $\partial $ that proves $\rho \,$; such $\partial 
$ is called a \emph{proof} of $\rho $.
\end{definition}

\begin{remark}
Tree-like constraint in the definition of validity is inessential, as any
dag-like $\partial $ can be unfolded into a tree-like $\partial ^{\prime }$
by thread-preserving top-down recursion. Moreover, ``$\,\partial $ proves $%
\rho $'' is deterministically verifiable in $\left| \partial \right| $%
-polynomial time, where $\left| \partial \right| $ denotes the weight of $%
\partial $.
\end{remark}

\begin{definition}
A given\emph{\ }\textsc{NM}$_{\rightarrow }$-deduction $\partial $ with
conclusion $\rho $ is \emph{polynomial}, resp. \emph{quasi-polynomial}, if
its weight (= total number of symbols) $\left| \partial \right| $, resp.
height $h\left( \partial \right) $ plus total weight $\phi \left( \partial
\right) $ of distinct formulas occurring in $\partial $, is polynomial in
the weight of $\rho $, $\left| \rho \right| $. Note that $\left| \partial
\right| $ of quasi-polynomial $\partial $ can be exponential in $\left| \rho
\right| $.
\end{definition}

\begin{theorem}[\emph{Main Theorem}]
Any given quasi-polynomial tree-like proof $\partial \vdash \rho $ can be
compressed into a polynomial dag-like proof $\partial ^{\ast }\vdash \rho $.
\end{theorem}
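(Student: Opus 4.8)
The plan is to compress the tree-like proof $\partial$ by identifying the (quasi-polynomially many) \emph{distinct formulas} that occur in $\partial$ as the node-labels of the target dag $\partial^{\ast}$, so that $\partial^{\ast}$ has only quasi-polynomially many nodes — but the crucial point is that, because a dag-node may be shared by many thread-positions, the \emph{weight} $\left|\partial^{\ast}\right|$ can be brought down to a polynomial in $\left|\rho\right|$. Concretely: let $F_{1},\dots,F_{m}$ enumerate the distinct formulas appearing in $\partial$; by the quasi-polynomiality hypothesis, $\sum_{i}\left|F_{i}\right|=\phi(\partial)$ and $h(\partial)$ are both polynomial in $\left|\rho\right|$, so $m$ is polynomial in $\left|\rho\right|$ too. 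First I would set up a canonical ``horizontal'' collapse: walk $\partial$ top-down and, whenever two subproofs have the same conclusion formula, attempt to merge their root nodes into a single dag-node carrying that formula. The obstruction to doing this naively is that the two subproofs may \emph{discharge different assumption-occurrences}, so merging would destroy the closedness condition of Definition (minimal validity). Hence the real content is a bookkeeping device that tracks, for each dag-node, which assumption-leaves are to be regarded as discharged along which threads.

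The key technical step, which I expect to be the main obstacle, is controlling the discharge structure after merging. I would handle this by decorating each edge of $\partial^{\ast}$ with the set (or rather a succinct code) of assumption formulas that are still ``open'' at that point, and re-interpreting the $(\rightarrow I)$ rule so that its validity is checked \emph{per maximal thread} exactly as in the definition. Since every maximal thread of $\partial^{\ast}$ unfolds (by the Remark: thread-preserving top-down recursion) to a maximal thread of some tree-like $\partial'$, and since $\partial$ itself already proves $\rho$, one shows that each such thread is closed provided the merges were performed only between nodes whose \emph{entire} subproof-behaviour with respect to available assumptions agrees. This forces a refinement: instead of collapsing on the formula alone, collapse on the pair (formula, relevant discharge-context); the height bound $h(\partial)$ then limits how large these contexts need to be, and the formula-count bound $m$ limits how many genuinely distinct such pairs survive, keeping the node-count polynomial. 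The delicate estimate is that the contexts themselves do not blow up the weight — this is where the height-polynomial part of ``quasi-polynomial'' is used, bounding each context by $h(\partial)$ many formula-pointers rather than by full formulas.

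Having fixed the node set and the decorated edges, the remaining steps are routine. I would (i) verify that $\partial^{\ast}$ is a legitimate dag-like \textsc{NM}$_{\rightarrow}$-deduction, i.e. every node is the conclusion of an instance of $(\rightarrow I)$, $(\rightarrow E)$, or $(R)$ applied to its in-neighbours; (ii) invoke the Remark to unfold $\partial^{\ast}$ to a tree-like $\partial'$ and check, thread by thread, that $\partial'$ proves $\rho$, hence $\partial^{\ast}\vdash\rho$; and (iii) tally the weight: the number of nodes is polynomial in $\left|\rho\right|$, each node's label is one of the $F_{i}$ so contributes at most $\max_{i}\left|F_{i}\right|\le\phi(\partial)$, and each edge's decoration is a list of at most $h(\partial)$ pointers, so $\left|\partial^{\ast}\right|$ is polynomial in $\left|\rho\right|$. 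The one place where care is genuinely required — and where I would spend most of the write-up — is proving that the formula-plus-context collapse actually terminates in a dag (no cycles are reintroduced) and still realizes every inference correctly; everything downstream of that is accounting.
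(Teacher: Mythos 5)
There is a genuine gap, and it sits exactly at the point you call the ``delicate estimate.'' Your plan is to merge nodes only when both the formula \emph{and} the relevant discharge-context agree, so that closedness of threads survives the merge. But then the node count of $\partial^{\ast}$ is governed by the number of distinct pairs (formula, context), and this is not polynomial in $\left|\rho\right|$: even granting that along any single thread at most $h(\partial)$ formulas can be discharged below a given node, the contexts are subsets of size up to $h(\partial)$ drawn from the $m$ distinct formulas, and there can be on the order of $\binom{m}{h(\partial)}$ of them --- superpolynomial even when $m$ and $h(\partial)$ are both polynomial in $\left|\rho\right|$. The assertion that ``the formula-count bound $m$ limits how many genuinely distinct such pairs survive'' is exactly what is not proved, and it is false in general; refining the merge criterion to protect the closure condition is precisely what destroys the size bound. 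A second, related omission: even for nodes with identical formula and identical context, the two occurrences may be conclusions of \emph{different} rules ($(\rightarrow I)$ versus $(\rightarrow E)$) with different premises, so after merging, the node is not the conclusion of any single \textsc{NM}$_{\rightarrow}$ inference; your step (i) cannot simply be ``verified'' --- this is the multipremise obstruction the paper illustrates with the $(\rightarrow I,E)$ configuration.

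For contrast, the paper resolves the tension the other way around: it merges by formula identity alone, level by level (which secures the bound $\left|\partial^{\prime}\right|\leq h(\partial)\times\phi(\partial)$), accepts that both local correctness and the closure condition are temporarily lost, restores local correctness by inserting the disjunctively-read separation rule $(S)$ to obtain $\partial^{\flat}$, and then proves a weak $(S)$-elimination (``cleansing'') theorem: using coherency properties of $\partial^{\flat}$ and a nondeterministic bottom-up recursion with the (possibly exponential) set of maximal threads of $\partial$ as an oracle, it extracts an $(S)$-free subdeduction $\partial^{\ast}$ that does prove $\rho$. That second step is the real content of the Main Theorem; in your write-up it has been traded for a finer merge whose polynomial size is unsubstantiated, so the argument as proposed does not go through.
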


\begin{proof}
See GH1, 2 that presented desired \emph{horizontal compression} of
quasi-polynomial tree-like proofs into equivalent polynomial dag-like proofs
having mutually different formulas on every horizontal level (see also
Section 5 below).
\end{proof}

\section{Propositional complexity.}

\subsection{Case $\mathbf{NP}$ vs $\mathbf{coNP}$}

\begin{lemma}
Any normal\ tree-like \textsc{NM}$_{\rightarrow }$-proof $\partial $ of $%
\rho $ whose height $h\left( \partial \right) $ is polynomial in $\left|
\rho \right| $ is quasi-polynomial.
\end{lemma}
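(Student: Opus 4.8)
The plan is to bound the two quantities that make up "quasi-polynomial" — namely the height $h(\partial)$ and the total weight $\phi(\partial)$ of distinct formulas occurring in $\partial$ — in terms of $|\rho|$, using the hypothesis that $h(\partial)$ is already polynomial in $|\rho|$. The first is free by assumption, so the real content is to show that $\phi(\partial)$ is polynomial in $|\rho|$. For this I would invoke the \emph{subformula property} of normal $\textsc{NM}_\rightarrow$-deductions: in a normal natural-deduction proof every formula occurring in $\partial$ is a subformula of the conclusion $\rho$ or of an (undischarged — but here all are discharged) assumption, hence ultimately a subformula of $\rho$ itself, since a proof of $\rho$ has all assumptions discharged. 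More precisely, by Prawitz-style normalization for minimal implicational logic, a normal derivation splits each branch into an elimination part followed by an introduction part, and along any branch the formulas first descend through subformulas (E-part) and then ascend through subformulas (I-part); a standard induction on the branch structure gives that every formula in $\partial$ is a subformula of $\rho$.

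**The counting step.**

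Given the subformula property, the number of \emph{distinct} formulas occurring in $\partial$ is at most the number of distinct subformulas of $\rho$, which is at most $|\rho|$ (each subformula is determined by a position in the syntax tree of $\rho$, and there are at most $|\rho|$ such positions). Each such subformula has weight at most $|\rho|$. Therefore
\begin{equation*}
\phi(\partial) \;\le\; |\rho| \cdot |\rho| \;=\; |\rho|^2,
\end{equation*}
which is polynomial in $|\rho|$. Combined with the hypothesis that $h(\partial)$ is polynomial in $|\rho|$, the sum $h(\partial) + \phi(\partial)$ is polynomial in $|\rho|$, so $\partial$ is quasi-polynomial by definition. This is essentially the whole argument; the role of "normal" is precisely to license the subformula property, without which $\phi(\partial)$ could blow up.

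**The main obstacle.**

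The one point requiring care — and the step I expect to be the real obstacle — is making sure that the subformula property is available in exactly the form needed for \emph{this} calculus, which includes the repetition rule $(R)$ and which works with the dag/tree-thread notion of "proves" from the Definition of minimal validity rather than the textbook tree-ND with a single open-assumption bookkeeping. I would need to check that (i) the repetition rule introduces no new formulas (trivially true, since $(R)$ copies $\alpha$ to $\alpha$), and (ii) "normal" for this system is defined so that no maximal formula (a formula that is both the conclusion of $(\rightarrow I)$ and the major premise of $(\rightarrow E)$, possibly with $(R)$'s interspersed) occurs, and that this still yields the branch decomposition into E-part and I-part. Once the normal-form structure theorem is in hand, the subformula property and hence the polynomial bound on $\phi(\partial)$ follow by the usual induction on branches, and the lemma is immediate. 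A secondary, purely bookkeeping concern is the exact convention for "weight" — whether it counts symbols with or without multiplicity — but under either reading the bound $\phi(\partial) = O(|\rho|^2)$ holds.
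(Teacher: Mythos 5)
Your argument is correct and is exactly the intended one: the paper (deferring details to GH1--GH3) relies on the Prawitz subformula property of normal \textsc{NM}$_{\rightarrow}$-proofs with all assumptions discharged, so that the distinct formulas are among the at most $\left| \rho \right|$ subformulas of $\rho$, giving $\phi(\partial)\leq\left| \rho \right|^{2}$ and hence quasi-polynomiality once $h(\partial)$ is assumed polynomial. Your cautionary remarks about $(R)$ and the definition of normality are the right points to check, and they go through as you expect.
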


\begin{lemma}[GH3]
Let $P$ be the Hamiltonian graph problem and $\rho _{G}$ express that a
given graph $G$ has no Hamiltonian cycles. There exists a normal tree-like 
\textsc{NM}$_{\rightarrow }$-proof $\partial $ of $\rho _{G}$ such that $%
h\left( \partial \right) $ is polynomial in $\left| G\right| $ (and hence $%
\left| \rho _{G}\right| $), provided that $G$ is non-Hamiltonian.
\end{lemma}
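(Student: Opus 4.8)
The plan is to exhibit, for every non-Hamiltonian $G$ on vertices $\{1,\dots,n\}$, a concrete purely implicational formula $\rho _{G}$ of size polynomial in $\left| G\right| $ expressing ``$G$ has no Hamiltonian cycle'', together with a \emph{normal} tree-like \textsc{NM}$_{\rightarrow }$-deduction $\partial \vdash \rho _{G}$ of polynomial height. (A fortiori such $\partial $ is quasi-polynomial, by the previous Lemma: a normal proof of a polynomial-size conclusion has, by the subformula property, only polynomially many distinct subformulas; but quasi-polynomiality is not itself needed for the present statement.)

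First I would fix the encoding. Use propositional atoms $x_{v,k}$ (``vertex $v$ occupies position $k$'' of a hypothetical Hamiltonian cycle), $1\le v,k\le n$, together with a distinguished atom $f$ in the role of falsum, and put $\rho _{G}:=H_{1}\rightarrow H_{2}\rightarrow \cdots \rightarrow H_{N}\rightarrow f$, where the $H_{i}$ enumerate: the \emph{start} clause $x_{1,1}$ (a cycle may be rotated to begin at vertex $1$); for each position $k<n$ and vertex $u$ a \emph{branching} hypothesis $x_{u,k}\rightarrow \bigl((x_{w_{1},k+1}\rightarrow f)\rightarrow \cdots \rightarrow (x_{w_{m},k+1}\rightarrow f)\rightarrow f\bigr)$, where $w_{1},\dots ,w_{m}$ are the neighbours of $u$ (so this degenerates to $x_{u,k}\rightarrow f$ when $u$ has no neighbour); \emph{clash} hypotheses $x_{v,k}\rightarrow x_{v,k'}\rightarrow f$ and $x_{v,k}\rightarrow x_{v',k}\rightarrow f$ for $k\ne k'$ and $v\ne v'$; and \emph{closure} hypotheses $x_{w,n}\rightarrow f$ for each $w$ non-adjacent to $1$. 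One checks $\left| \rho _{G}\right| =O(n^{3}\log n)$, and that $\rho _{G}$ is valid in minimal logic iff $G$ is non-Hamiltonian: the ``if'' direction is the construction below, and for ``only if'' note that if $G$ had a Hamiltonian cycle through $1$, the classical valuation making $x_{v,k}$ true exactly along that cycle and $f$ false would satisfy every $H_{i}$ while falsifying $f$, so $\rho _{G}$ would be classically, hence minimally, unprovable.

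Assuming $G$ non-Hamiltonian, I would build $\partial $ in two phases. The bottom phase applies $(\rightarrow I)$ $N$ times, discharging $H_{1},\dots ,H_{N}$ and reducing the goal to deriving $f$ from the $H_{i}$ taken as open assumptions (height cost $N=O(n^{3})$). The top phase recurses along the exhaustive backtracking search over partial Hamiltonian paths $P=(u_{1},\dots ,u_{k})$ with $u_{1}=1$: at a node with $k<n$ we invoke the branching hypothesis for $u_{k}$ by $m{+}1$ applications of $(\rightarrow E)$ (the first minor premise being the available assumption $x_{u_{k},k}$), thereby reducing the goal $f$ to deriving $f$, for each neighbour $w$ of $u_{k}$, from the freshly discharged assumption $x_{w,k+1}$; if $w$ already lies on the path we close that branch with a clash hypothesis in $O(1)$ steps, otherwise we recurse on the extended path; at a node with $k=n$ the path is a Hamiltonian path, which since $G$ is non-Hamiltonian cannot close, so the closure hypothesis $x_{u_{n},n}\rightarrow f$ fires. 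Each recursion level adds $O(n)$ to the height and the recursion depth is $\le n$, whence $h(\partial )=O(n^{3})=\mathrm{poly}(\left| G\right| )$ (the tree itself may be of exponential size, which is harmless). Finally $\partial $ is normal: every $(\rightarrow I)$ occurs either in the bottom phase (its conclusion is $\rho _{G}$ or an antecedent-tail of it) or as the minor-premise subproof ``$x_{w,k+1}\rightarrow f$'' of a branching hypothesis; in neither case is it the major premise of an $(\rightarrow E)$, so no maximal formula occurs.

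\textbf{Main obstacle.} The delicate point is the interplay of faithfulness, succinctness and \emph{minimality of the logic}. The encoding must be chosen so that the ``case split on the next vertex'' is available as an ordinary hypothesis of the displayed higher-order shape rather than via excluded middle (unavailable in \textsc{NM}$_{\rightarrow }$), and so that no step of the refutation secretly needs $\bot $-elimination or $\neg \neg p\rightarrow p$. Nailing down such a $\rho _{G}$, confirming that it genuinely expresses non-Hamiltonicity, and checking that every branch of the shallow deduction above terminates in $f$ via a clash or closure hypothesis precisely when $G$ is non-Hamiltonian, is where the real work lies; the polynomial height bound and the normality verification are then routine bookkeeping.
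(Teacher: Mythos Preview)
The paper does not give a proof of this lemma at all; it merely attributes the statement to reference GH3 (the \emph{Addendum}), so there is no in-text argument to compare your proposal against.

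Judged on its own, your sketch is sound and is the natural construction. Encoding the case split ``some neighbour of $u$ sits at position $k{+}1$'' as the hypothesis $x_{u,k}\rightarrow\bigl((x_{w_1,k+1}\rightarrow f)\rightarrow\cdots\rightarrow(x_{w_m,k+1}\rightarrow f)\rightarrow f\bigr)$ is exactly the device needed to run the exhaustive path search inside \textsc{NM}$_{\rightarrow}$ without excluded middle or ex falso, and your normality check is correct: every $(\rightarrow I)$-conclusion is either a tail of $\rho_G$ (bottom phase) or a minor premise $x_{w,k+1}\rightarrow f$ (top phase), never the major premise of an $(\rightarrow E)$. The height accounting is also right: the $O(n^{3})$ introductions in the bottom phase dominate the $O(n^{2})$ depth of the search tree. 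One point to make explicit in a full write-up: the single place where the hypothesis ``$G$ non-Hamiltonian'' is consumed is at depth $n$, where it guarantees that the current endpoint $u_n$ is not adjacent to $1$ and hence that the relevant closure hypothesis $x_{u_n,n}\rightarrow f$ is among the $H_i$; without this the leaf of the search would be open.
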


Recall that the non-hamiltoniancy in question is $\mathbf{coNP}$-complete.
Hence Theorem 4 yields

\begin{corollary}[GH2, GH3]
$\mathbf{NP=coNP}$\textbf{\ }holds true.
\end{corollary}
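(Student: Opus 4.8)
The plan is to chain the three preceding results — Lemma 6, Lemma 5, and Theorem 4 — and then invoke the polynomial-time verifiability of dag-like proofs recorded in the Remark. First I would fix a non-Hamiltonian graph $G$ together with its associated formula $\rho _{G}$, noting that the encoding is chosen so that $\left| \rho _{G}\right| $ is polynomial in $\left| G\right| $ (and vice versa). By Lemma 6 there is a normal tree-like \textsc{NM}$_{\rightarrow }$-proof $\partial $ of $\rho _{G}$ whose height $h\left( \partial \right) $ is polynomial in $\left| G\right| $, hence in $\left| \rho _{G}\right| $. By Lemma 5 such a $\partial $ is quasi-polynomial, i.e. $h\left( \partial \right) +\phi \left( \partial \right) $ is polynomial in $\left| \rho _{G}\right| $, even though $\left| \partial \right| $ itself may be exponential.

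Next I would apply the Main Theorem (Theorem 4) to this $\partial $: it delivers a dag-like proof $\partial ^{\ast }\vdash \rho _{G}$ with $\left| \partial ^{\ast }\right| $ polynomial in $\left| \rho _{G}\right| $, hence polynomial in $\left| G\right| $. By the Remark, the predicate ``$\partial ^{\ast }$ proves $\rho _{G}$'' — which includes checking that every maximal thread is closed, i.e. the discharge conditions of minimal validity — is decidable deterministically in time polynomial in $\left| \partial ^{\ast }\right| $. Thus $\partial ^{\ast }$ is a polynomial-size, polynomial-time-checkable certificate that $G$ is non-Hamiltonian, so the non-Hamiltonian graph problem lies in $\mathbf{NP}$.

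Finally, since (as recalled immediately above the corollary) non-Hamiltonicity is $\mathbf{coNP}$-complete, membership of one $\mathbf{coNP}$-complete problem in $\mathbf{NP}$ forces $\mathbf{coNP}\subseteq \mathbf{NP}$; complementing languages then yields $\mathbf{NP}\subseteq \mathbf{coNP}$, and therefore $\mathbf{NP}=\mathbf{coNP}$.

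The real obstacle is not in this final assembly but upstream, inside the results being cited: constructing, uniformly for every non-Hamiltonian $G$, a \emph{normal} tree-like proof of merely polynomial \emph{height} (Lemma 6), and ensuring that the horizontal compression behind Theorem 4 actually collapses a possibly exponential tree into a polynomial dag while preserving minimal validity. Within the corollary itself the only points that need care are bookkeeping: that $\left| \rho _{G}\right| $ and $\left| G\right| $ are polynomially related (so the various ``polynomial in'' clauses are interchangeable), and that the verification procedure of the Remark genuinely certifies the thread/discharge structure, so that $\partial ^{\ast }$ functions as a bona fide $\mathbf{NP}$-witness rather than merely a short string.
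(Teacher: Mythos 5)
Your proposal is correct and follows exactly the route the paper intends: chaining Lemma 6 and Lemma 5 to get a quasi-polynomial tree-like proof of $\rho _{G}$, compressing via Theorem 4 to a polynomial dag-like certificate checkable in polynomial time by the Remark, and concluding from $\mathbf{coNP}$-completeness of non-Hamiltonicity that $\mathbf{NP}=\mathbf{coNP}$. The paper states this only telegraphically (``Hence Theorem 4 yields\ldots''), so your write-up is just a fuller rendering of the same argument.
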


\subsection{Case $\mathbf{NP}$ vs $\mathbf{PSPACE}$}

Recall that the minimal validity is $\mathbf{PSPACE}$-complete. Let \textsc{%
LM}$_{\rightarrow }$ be Hudelmaier's SC that is sound and complete for
minimal logic.

\begin{theorem}[Hudelmaier]
Any formula $\rho $ is valid in minimal logic iff sequent $\Rightarrow \rho $
is provable in \textsc{LM}$_{\rightarrow }$ by a quasi-polynomial tree-like
derivation.
\end{theorem}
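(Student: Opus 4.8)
The plan is to derive the equivalence from two facts about Hudelmaier's calculus: (i)~the soundness and completeness of \textsc{LM}$_{\rightarrow }$ for minimal implicational logic, which we are given, and (ii)~the fact that \textsc{LM}$_{\rightarrow }$ is \emph{contraction-free}, so that \emph{every} \textsc{LM}$_{\rightarrow }$-derivation terminates with height polynomial in the weight of its end-sequent. Recall that \textsc{LM}$_{\rightarrow }$ consists of the axiom $\Gamma ,p\Rightarrow p$ ($p$ atomic), the right rule passing from $\Gamma ,\alpha \Rightarrow \beta $ to $\Gamma \Rightarrow \alpha \rightarrow \beta $, the atomic left rule passing from $\Gamma ,p,\beta \Rightarrow \gamma $ to $\Gamma ,p,p\rightarrow \beta \Rightarrow \gamma $ (with $p$ atomic, $p\in \Gamma $), and the nested left rule
\[
\frac{\Gamma ,\delta \rightarrow \beta ,\gamma \Rightarrow \delta \qquad \Gamma ,\beta \Rightarrow \varepsilon }{\Gamma ,\left( \gamma \rightarrow \delta \right) \rightarrow \beta \Rightarrow \varepsilon }\,,
\]
which is the rule that replaces the need for contraction.

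The direction ($\Leftarrow $) is immediate and uses no size information: if $\Rightarrow \rho $ has \emph{any} \textsc{LM}$_{\rightarrow }$-derivation, then $\rho $ is valid in minimal logic by soundness. For ($\Rightarrow $), suppose $\rho $ is valid; by completeness $\Rightarrow \rho $ has \emph{some} tree-like \textsc{LM}$_{\rightarrow }$-derivation $\partial $, and it remains to verify that any such $\partial $ is quasi-polynomial in the sense of Definition 3, i.e.\ that $h\left( \partial \right) +\phi \left( \partial \right) $ is polynomial in $\left| \rho \right| $.

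The bound on $\phi \left( \partial \right) $ I would get from a generalized subformula property of \textsc{LM}$_{\rightarrow }$: checking the four rules, the only formula a rule introduces that is not literally a subformula of its principal formula is the antecedent $\delta \rightarrow \beta $ in the nested rule, whose components $\delta $ and $\beta $ are subformulas of $\left( \gamma \rightarrow \delta \right) \rightarrow \beta $; so every formula occurring anywhere in $\partial $ is either a subformula of $\rho $ or an implication $\sigma \rightarrow \tau $ with $\sigma ,\tau $ subformulas of $\rho $. There are $O(\left| \rho \right| ^{2})$ such formulas, each of weight $O(\left| \rho \right| )$, hence $\phi \left( \partial \right) =O(\left| \rho \right| ^{3})$. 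The bound on $h\left( \partial \right) $ I would get from Hudelmaier's termination analysis: a well-founded weight $\left\Vert \Gamma \Rightarrow \sigma \right\Vert $ on sequents that strictly decreases from the conclusion of every \textsc{LM}$_{\rightarrow }$-rule to each of its premises and satisfies $\left\Vert \Rightarrow \rho \right\Vert \le q(\left| \rho \right| )$ for a fixed polynomial $q$; then every branch of $\partial $ has length $\le \left\Vert \Rightarrow \rho \right\Vert $, so $h\left( \partial \right) \le q(\left| \rho \right| )$. Combining, $h\left( \partial \right) +\phi \left( \partial \right) $ is polynomial in $\left| \rho \right| $, so $\partial $ is quasi-polynomial. (Its \emph{weight} $\left| \partial \right| $ may well be exponential, because the nested rule branches --- exactly the phenomenon flagged in Definition 3.)

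The step I expect to be the real obstacle is the height bound, that is, pinning down Hudelmaier's weight $\left\Vert \cdot \right\Vert $ and checking the strict decrease along every rule while keeping $\left\Vert \Rightarrow \rho \right\Vert $ polynomial. For the axiom, the right rule, the atomic left rule, and the right premise of the nested rule this is a short calculation; the genuine difficulty sits in the \emph{left} premise of the nested rule, where one removes the principal formula $\left( \gamma \rightarrow \delta \right) \rightarrow \beta $ but adds back $\delta \rightarrow \beta $ together with $\gamma $ and the new succedent $\delta $, so that $\delta $ is, in effect, counted twice on the premise side. Here any purely additive size-measure provably increases, and one must use Hudelmaier's carefully tuned (non-additive) assignment --- or a lexicographic/multiset refinement --- and then separately argue that it nonetheless stays polynomially bounded on the end-sequent $\Rightarrow \rho $; this last point is the crux on which the whole compression strategy for the $\mathbf{PSPACE}$ case ultimately rests. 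Once the measure is in hand, the subformula count above and the appeals to soundness and completeness are routine, and the two implications assemble into the stated equivalence.
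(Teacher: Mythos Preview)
The paper does not give its own proof of this theorem: it is stated with the attribution ``[Hudelmaier]'' and backed only by the citation to his 1993 paper in the references, with no argument supplied. So there is nothing in the paper to compare your sketch against; what you have written is a reconstruction of Hudelmaier's own argument rather than an alternative to anything the authors do.

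As such a reconstruction, your outline is on target. The backward direction by soundness is immediate; for the forward direction, completeness gives a derivation, the generalized subformula property you state (every formula is a subformula of $\rho$ or an implication between two such) is correct for this calculus and yields the polynomial bound on $\phi(\partial)$, and the polynomial height bound is exactly what Hudelmaier's termination measure delivers. Your pinpointing of the left premise of the nested $\rightarrow$-left rule as the only place where a naive additive size measure fails is accurate; Hudelmaier's weight function is designed precisely to absorb the duplicated $\delta$ there while remaining bounded polynomially in $\left|\rho\right|$. Once that measure is granted, the rest of your assembly is routine, just as you say.
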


\begin{lemma}[GH1]
For any quasi-polynomial tree-like derivation of $\Rightarrow \rho $ in 
\textsc{LM}$_{\rightarrow }$ there exists a quasi-polynomial tree-like proof 
$\partial \vdash \rho $ in \textsc{NM}$_{\rightarrow }$.
\end{lemma}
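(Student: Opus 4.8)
The plan is to translate a given quasi-polynomial tree-like derivation $\mathcal{D}$ of $\Rightarrow\rho$ in Hudelmaier's calculus \textsc{LM}$_{\rightarrow}$ into an \textsc{NM}$_{\rightarrow}$ proof by the standard Gentzen-style correspondence between sequent-calculus derivations and natural deduction, and then verify that the translation does not blow up the relevant size parameters beyond quasi-polynomial. First I would fix notation for the intuitionistic/minimal purely implicational sequent rules of \textsc{LM}$_{\rightarrow}$: besides the axiom $\Gamma,p\Rightarrow p$, there is a right-implication rule $\frac{\Gamma,\alpha\Rightarrow\beta}{\Gamma\Rightarrow\alpha\rightarrow\beta}$ and Hudelmaier's contraction-free left-implication rule(s), in which the principal formula $(\alpha\rightarrow\beta)$ is analyzed with a decomposed left premise (e.g. for $\alpha=(\gamma\rightarrow\delta)$ the premise is $\Gamma,\gamma\rightarrow\delta$ replaced by $\Gamma,\delta$ together with a subderivation establishing $\gamma$, and for atomic $\alpha=p$ one needs $p\in\Gamma$ and the premise $\Gamma,\beta$). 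The key move is a recursive map $(\cdot)^{\flat}$ sending a derivation of $\Gamma\Rightarrow\rho$ to a tree-like \textsc{NM}$_{\rightarrow}$-deduction with root $\rho$ whose every undischarged leaf is labelled by a formula of $\Gamma$; the translation of $(\rightarrow R)$ is an application of $(\rightarrow I)$ discharging the $\alpha$-leaves, and the translation of each $(\rightarrow L)$-variant is built from one or two applications of $(\rightarrow E)$ (modus ponens) together with the recursively translated subderivations, possibly inserting the repetition rule $(R)$ to realize the ``copy'' of the principal left formula that Hudelmaier's rule keeps available. One then checks $(\mathcal{D}^{\flat}\vdash\rho)$ in the sense of Definition 2: every maximal thread from the root to a leaf $\alpha$ is closed, because an undischarged $\alpha$ in the \textsc{NM}$_{\rightarrow}$-deduction corresponds to an active antecedent formula, and the only way such a formula enters is via the (translated) axiom or via a left rule whose principal formula was introduced lower down by a $(\rightarrow I)$ discharging exactly that antecedent — so a careful bookkeeping of where antecedent formulas originate shows the thread is closed.

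The second half is the size estimate. I would argue that $(\cdot)^{\flat}$ is \emph{locally bounded}: each \textsc{LM}$_{\rightarrow}$-rule instance is replaced by a constant-bounded block of \textsc{NM}$_{\rightarrow}$-rules (at most two $(\rightarrow E)$'s plus a bounded number of $(R)$'s and structural adjustments), and the formulas appearing in that block are all subformulas of formulas already occurring in the sequent instance (principal formula, side formulas, and their immediate subformulas). Hence the height of $\mathcal{D}^{\flat}$ is $O(h(\mathcal{D}))$, and the set of distinct formulas occurring in $\mathcal{D}^{\flat}$ is contained in the set of subformulas of formulas occurring in $\mathcal{D}$, so the total weight $\phi(\mathcal{D}^{\flat})$ of distinct formulas is polynomially bounded in $\phi(\mathcal{D})$ (subformula closure at most squares the total weight). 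Since $\mathcal{D}$ is quasi-polynomial, i.e. $h(\mathcal{D})+\phi(\mathcal{D})$ is polynomial in $\left|\rho\right|$, it follows that $h(\mathcal{D}^{\flat})+\phi(\mathcal{D}^{\flat})$ is polynomial in $\left|\rho\right|$, i.e. $\partial:=\mathcal{D}^{\flat}$ is a quasi-polynomial tree-like \textsc{NM}$_{\rightarrow}$-proof of $\rho$, as required. (Note that the \emph{weight} $\left|\partial\right|$ itself may still be exponential, which is fine — only the quasi-polynomial parameters must be controlled.)

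The main obstacle I anticipate is getting the proof-conditions of Definition 2 exactly right under Hudelmaier's \emph{contraction-free} left rules: because \textsc{LM}$_{\rightarrow}$ does not literally keep a second copy of the principal implication, the naive Gentzen translation would leave an undischarged leaf labelled by that implication with no matching $(\rightarrow I)$ above it in the thread. The fix — and the technical heart of the argument — is to track, for each active antecedent formula of each sequent in $\mathcal{D}$, the lowest $(\rightarrow R)$-inference below it that introduced it, and to arrange $(\cdot)^{\flat}$ so that in $\mathcal{D}^{\flat}$ the corresponding $\alpha$-leaf lies in the scope of the $(\rightarrow I)$ translating that $(\rightarrow R)$; the contraction-free bookkeeping (one formula being "reused" across several left-rule applications at different heights) is what makes this delicate, and handling the atomic left rule, where $p$ stays in the context, needs the repetition rule $(R)$ to keep the thread syntactically well-formed. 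A secondary, purely routine point is confirming that the subformula property of \textsc{LM}$_{\rightarrow}$-derivations (which Hudelmaier's system enjoys) is what guarantees the distinct-formula bound, so this step uses Theorem 7 only through the existence of $\mathcal{D}$ and not through any finer property.
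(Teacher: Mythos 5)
The paper itself gives no proof of this lemma (it is quoted from GH1), so your proposal must stand on its own, and it has a genuine gap at exactly the point that carries the lemma's content: the height bound. Your claim that each \textsc{LM}$_{\rightarrow}$ rule instance is replaced by a ``constant-bounded block'' of \textsc{NM}$_{\rightarrow}$ rules, hence $h(\mathcal{D}^{\flat})=O(h(\mathcal{D}))$, is not correct for the translation you describe. For the two-premise left rules (principal formula of the shape $(\gamma\rightarrow\delta)\rightarrow\beta$), the Gentzen-style translation does not insert a constant block at one node of a branch: it takes the translated major premise (a deduction of $\sigma$ from $\Gamma,\beta$) and replaces \emph{each open $\beta$-leaf} by a deduction of $\beta$ which contains, stacked above it, the entire translated minor premise. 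Consequently heights compose additively along threads, $h(T(D))\approx h(T(D_{\mathrm{major}}))+h(T(D_{\mathrm{minor}}))+O(1)$, and unfolding this recurrence over the whole derivation bounds $h(\mathcal{D}^{\flat})$ only by (roughly) the number of two-premise inferences in $\mathcal{D}$, not by $h(\mathcal{D})$. This matters because a quasi-polynomial tree-like \textsc{LM}$_{\rightarrow}$ derivation is only guaranteed to have polynomial height and polynomial total weight of distinct formulas; its \emph{size}, and hence the number of such left inferences, may be exponential in $\left|\rho\right|$ (think of an essentially full binary tree of nested left rules of polynomial height). So your argument, as written, delivers a tree-like \textsc{NM}$_{\rightarrow}$ deduction with the distinct-formula parameter under control but with possibly exponential height, i.e.\ it does not establish quasi-polynomiality.

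The parts of your sketch concerning correctness — closure of maximal threads, the contraction-free bookkeeping of where antecedent formulas are discharged, the use of $(R)$, and the weak-subformula bound giving $\phi(\mathcal{D}^{\flat})\leq\mathrm{poly}(\phi(\mathcal{D}))$ — are plausible and are indeed the comparatively routine part. The actual technical heart of the GH1 lemma is the quantitative control of the height of the translated deduction, and it cannot be obtained by charging one constant block per \textsc{LM}$_{\rightarrow}$ inference; note also that a naive attempt to bound $h(T(D))$ by an additive measure of the end-sequent fails, since both premises of Hudelmaier's two-premise rules repeat the context $\Gamma$. One has to exploit the specific combinatorics of Hudelmaier's rules (the way the minor premise decomposes the principal formula and his degree measure), or restructure the \textsc{LM}$_{\rightarrow}$ derivation before translating, so that the additive composition at two-premise left rules is absorbed into a polynomial bound in $\left|\rho\right|$. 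Until you supply such an argument, the height estimate — and with it the lemma — is not proved.
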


\begin{corollary}[GH2]
$\mathbf{PSPACE}\subseteq \mathbf{NP}$ and hence $\mathbf{NP=PSPACE}$ holds
true.
\end{corollary}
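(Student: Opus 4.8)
The plan is to run, for every language $L\in\mathbf{PSPACE}$, a nondeterministic polynomial-time decision procedure whose certificate is a compressed \textsc{NM}$_{\rightarrow}$-proof. Fix $L\in\mathbf{PSPACE}$. Since minimal validity is $\mathbf{PSPACE}$-complete, there is a deterministic polynomial-time reduction $x\mapsto\rho_{x}$ with $x\in L$ iff $\rho_{x}$ is valid in minimal logic; in particular $\left|\rho_{x}\right|$ is polynomial in $\left|x\right|$. It therefore suffices to show that the minimal-validity problem itself lies in $\mathbf{NP}$, i.e. that validity of $\rho$ admits an $\mathbf{NP}$-certificate of size polynomial in $\left|\rho\right|$.

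The certificate is obtained by composing the earlier results. If $\rho$ is valid, then by Theorem 8 (Hudelmaier) the sequent $\Rightarrow\rho$ has a quasi-polynomial tree-like derivation in \textsc{LM}$_{\rightarrow}$; by Lemma 9 (GH1) this yields a quasi-polynomial tree-like proof $\partial\vdash\rho$ in \textsc{NM}$_{\rightarrow}$, i.e. one whose height $h(\partial)$ plus total weight $\phi(\partial)$ of distinct formulas is polynomial in $\left|\rho\right|$ — precisely the hypothesis of the Main Theorem. Applying Theorem 4 compresses $\partial$ into a polynomial dag-like proof $\partial^{\ast}\vdash\rho$, so $\left|\partial^{\ast}\right|$ is polynomial in $\left|\rho\right|$, hence polynomial in $\left|x\right|$. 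This $\partial^{\ast}$ is the desired certificate.

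The $\mathbf{NP}$ algorithm for $L$ then reads: on input $x$, compute $\rho_{x}$ in deterministic polynomial time; nondeterministically guess a dag-like \textsc{NM}$_{\rightarrow}$-deduction $\partial^{\ast}$ of weight bounded by the fixed polynomial in $\left|\rho_{x}\right|$; and deterministically verify ``$\partial^{\ast}\vdash\rho_{x}$'' in time polynomial in $\left|\partial^{\ast}\right|$, which is legitimate by Remark 2. Accept iff the verification succeeds. Completeness is exactly the preceding paragraph; soundness holds because Definition 1, together with the unfolding observation of Remark 2, guarantees that whenever some $\partial^{\ast}$ proves $\rho_{x}$ the formula $\rho_{x}$ is valid, so $x\in L$. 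Hence $L\in\mathbf{NP}$, giving $\mathbf{PSPACE}\subseteq\mathbf{NP}$; combined with the trivial inclusion $\mathbf{NP}\subseteq\mathbf{PSPACE}$ this yields $\mathbf{NP}=\mathbf{PSPACE}$.

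Since all the genuine mathematical content has been discharged into Theorem 4 and Theorem 8, the remaining work is bookkeeping on size parameters, and this is also where the main obstacle lies: one must check that the notion of ``quasi-polynomial'' produced by Hudelmaier's theorem and preserved by Lemma 9 coincides with the $h(\partial)+\phi(\partial)$ bound demanded by the Main Theorem; that composing ``polynomial in $\left|\rho_{x}\right|$'' with ``$\left|\rho_{x}\right|$ polynomial in $\left|x\right|$'' stays polynomial; and that a dag-like proof admits a binary encoding of size linear in its weight, so that ``guess $\partial^{\ast}$'' is a bona fide $\mathbf{NP}$ step. The one way the argument could fail is a parameter mismatch — for instance, if the compression in Theorem 4 were only quasi-polynomial in $\left|\rho\right|$ rather than genuinely polynomial, or if it were polynomial in $\left|\partial\right|$ rather than in $\left|\rho\right|$ — so the crux is to confirm that the Main Theorem's polynomial bound is stated and applied with argument $\left|\rho\right|$.
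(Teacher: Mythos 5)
Your proposal is correct relative to the paper's stated results and follows essentially the same route the paper intends for this corollary: reduce any $L\in\mathbf{PSPACE}$ to minimal validity via its $\mathbf{PSPACE}$-completeness, obtain a quasi-polynomial tree-like \textsc{NM}$_{\rightarrow}$-proof from Theorem 8 and Lemma 9, compress it by the Main Theorem into a polynomial dag-like certificate, and verify ``$\partial^{\ast}\vdash\rho_{x}$'' in polynomial time as in Remark 2 and Theorem 11, with soundness by unfolding dag-like proofs to tree-like ones. Your added bookkeeping (composition of polynomial bounds, matching the quasi-polynomial notion to $h(\partial)+\phi(\partial)$, soundness of the certificate check) is exactly the implicit content of the paper's argument, so the approaches coincide.
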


\begin{remark}
Using $\mathbf{PSPACE}$-completeness of quantified boolean logic V. Sopin
claimed to have obtained a partial result $\mathbf{PH=PSPACE}$.
\end{remark}

\section{ More on Main Theorem}

\begin{itemize}
\item  \textbf{First part of tree-to-dag horizontal compression}
\end{itemize}

For any tree-like \textsc{NM}$_{\rightarrow }$ proof $\partial $ of $\rho \,$
let $\partial ^{\prime }\in \,$\textsc{NM}$_{\rightarrow }$ be defined by
bottom-up recursion on $h\left( \partial \right) $ such that for any $n\leq
h\left( \partial \right) $, the $n^{th}$\ horizontal section of $\partial
^{\prime }$ is obtained by merging all nodes with identical formulas
occurring in the $n^{th}$\ horizontal section of $\partial $. The inferences
in $\partial ^{\prime }$ are naturally inherited by the ones in $\partial $.
Obviously $\partial ^{\prime }$ is a dag-like (not necessarily tree-like
anymore) deduction with conclusion $\rho $. Moreover $\partial ^{\prime }$
is polynomial as $\left| \partial ^{\prime }\right| \leq h\left( \partial
\right) \times \phi \left( \partial \right) $. However, $\partial ^{\prime }$
need not preserve the local correctness with respect to basic inferences $%
\left( \rightarrow I\right) $, $\left( \rightarrow E\right) $, $\left(
R\right) $. For example, a compressed multipremise configuration 
\begin{equation*}
\fbox{$\ \left( \rightarrow I,E\right) :\dfrac{\beta \quad \quad \gamma
\quad \quad \gamma \rightarrow \left( \alpha \rightarrow \beta \right) }{%
\alpha \rightarrow \beta }$}
\end{equation*}
that is obtained by merging identical conclusions $\alpha \rightarrow \beta $
of 
\begin{equation*}
\fbox{$\ \left( \rightarrow I\right) :\dfrac{\beta }{\alpha \rightarrow
\beta }$}\quad \text{and\quad }\fbox{$\ \left( \rightarrow E\right) :\dfrac{%
\gamma \quad \quad \gamma \rightarrow \left( \alpha \rightarrow \beta
\right) }{\alpha \rightarrow \beta }$}
\end{equation*}
is not a legitimate inference in \textsc{NM}$_{\rightarrow }$.

To overcome this trouble we upgrade $\partial ^{\prime }$ to a modified
deduction $\partial ^{\flat }$ that separates such multiple premises using
instances of the separation rule $\left( S\right) $ 
\begin{equation*}
\fbox{$\left( S\right) :\dfrac{\overset{n\ times}{\overbrace{\alpha \quad
\cdots \quad \alpha }}}{\alpha \ }\ $($n$ arbitrary) }
\end{equation*}
that is understood disjunctively: ``\emph{if at least one premise is proved
then so is the conclusion}'' (in contrast to ordinary inferences: ``\emph{if
all premises are proved then so are the conclusions}'').

For example, $\left( \rightarrow I,E\right) $ as above should be replaced by
this modified configuration in \textsc{NM}$_{\rightarrow }^{\flat }$ = 
\textsc{NM}$_{\rightarrow }$ + $\left( S\right) $ 
\begin{equation*}
\fbox{$\left( S\right) :\dfrac{\ \left( \rightarrow I\right) :\ \dfrac{\beta 
}{\alpha \rightarrow \beta \ }\quad \left( \rightarrow E\right) \ :\dfrac{%
\gamma \quad \quad \gamma \rightarrow \left( \alpha \rightarrow \beta
\right) }{\alpha \rightarrow \beta \ }}{\ \alpha \rightarrow \beta \ }$}%
\text{.}
\end{equation*}
Such $\partial ^{\flat }$ is a locally correct dag-like deduction in \textsc{%
NM}$_{\rightarrow }^{\flat }$ with conclusion $\rho $. Moreover $\partial
^{\flat }$ is polynomial, since its every $\left( S\right) $-free
subdeduction at most doubles the weight of $\partial ^{\prime }$. However,
we can't claim that $\partial ^{\flat }$ proves $\rho $ because arbitrary
maximal dag-like threads in $\partial ^{\flat }$ can arise by concatenating
different segments of different threads in $\partial $, which can destroy
the required closure condition.

To solve this problem we observe that $\partial ^{\flat }$ satisfies certain
conditions of \emph{coherency} with respect to the set of threads, and
continue our compression as follows.

\begin{itemize}
\item  \textbf{Second part of tree-to-dag horizontal compression}
\end{itemize}

Here we prove \emph{weak $\left( S\right) $-elimination theorem} showing
that any coherent deduction $\partial ^{\flat }$ is further compressible
into a desired $\left( S\right) $-free subdeduction $\partial ^{\ast }$.
This part of compression (also called \emph{cleansing}) is defined by
nondeterministic bottom-up recursion on $h\left( \partial \right) $ while
using as oracle the whole (possibly exponential) set of maximal threads. 
\textbf{This completes proof of Main Theorem. \footnote{{\footnotesize see
GH1, GH2 for details}}}

\section{ More on dag-like provability}

Formal verification of the assertion $\partial \vdash \rho $ is simple --
whether for tree-like or, generally, dag-like $\partial $. Every node $x\in
\partial $ is assigned, by top-down recursion, a set of assumptions $A\left(
x\right) $ such that:

\begin{enumerate}
\item  $A\left( x\right) :=\left\{ \alpha \right\} $ if $x$ is a leaf
labeled $\alpha $,

\item  $A\left( x\right) :=A\left( y\right) $ if $x$ is the conclusion of $%
\left( R\right) $ with premise $y$,

\item  $A\left( x\right) :=A\left( y\right) \setminus \left\{ \alpha
\right\} $ if $x$ is the conclusion of $\left( \rightarrow I\right) $ with
label $\alpha \rightarrow \beta $\ and premise $y$,

\item  $A\left( x\right) :=A\left( y\right) \cup A\left( z\right) $ if $x$
is the conclusion of $\left( \rightarrow E\right) $ with premises $y,$ $z$.
\end{enumerate}

\begin{theorem}
$\partial \vdash \rho \Leftrightarrow A\left( r\right) =\emptyset $ holds
with respect to standard set-theoretic interpretations of ``$\,\cup $'' and
``$\,\setminus $'', where $r$ is the root of $\partial $ with formula-label $%
\rho $. Moreover, problem $A\left( r\right) \overset{?}{=}\emptyset $ is
solvable by a deterministic TM in $\left| \partial \right| $-polynomial time.
\end{theorem}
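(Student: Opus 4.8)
The plan is to prove the two halves of the biconditional separately and then verify the complexity claim. For the direction $A(r)=\emptyset \Rightarrow \partial \vdash \rho$, I would argue by a structural invariant: I claim that for every node $x\in\partial$, the set $A(x)$ computed by the four-clause recursion equals precisely the set of formulas $\alpha$ that label some leaf reachable from $x$ via a maximal thread segment containing no $(\rightarrow I)$ that discharges $\alpha$. This is proved by top-down induction on the dag structure (using the fact that the recursion is well-defined on a dag by the \emph{Remark} after the \emph{minimal validity} definition — the dag unfolds to a tree thread-faithfully, so $A(x)$ is independent of the unfolding). Clause 1 is the base case; clause 2 is immediate; clause 3 removes exactly the discharged $\alpha$ from the reachable-undischarged set; clause 4 takes the union over the two premises. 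Given the invariant, $A(r)=\emptyset$ says no leaf is reachable from the root by a thread with no discharging $(\rightarrow I)$ for its label — which is exactly the closure condition ``every maximal thread connecting the root to a leaf labeled $\alpha$ is closed'' from the definition of $\partial\vdash\rho$. The converse direction $\partial\vdash\rho \Rightarrow A(r)=\emptyset$ is then just the contrapositive of the same invariant: if $A(r)\neq\emptyset$, pick $\alpha\in A(r)$, and the invariant produces an open maximal thread from $r$ to an $\alpha$-leaf, contradicting provability.

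For the complexity claim, I would first observe that the DAG $\partial$ has at most $|\partial|$ nodes and at most $|\partial|$ edges, and each formula-label has weight at most $|\partial|$. Processing nodes in reverse topological order (leaves first, root last), each of the four clauses is evaluated once per node; clause 4's union and clause 3's set-difference on finite sets of formulas, each represented say as a sorted list or as a bit-vector indexed by the distinct subformulas occurring in $\partial$, cost at most $O(|\partial|^2)$ per node (there are at most $|\partial|$ distinct formulas, each of weight at most $|\partial|$). Summing over all $\le |\partial|$ nodes gives a crude $O(|\partial|^3)$ bound, and the final test $A(r)\overset{?}{=}\emptyset$ is then trivial. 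Computing a topological order of a DAG with $|\partial|$ nodes and edges is linear. Hence the whole procedure runs in time polynomial in $|\partial|$ on a deterministic Turing machine.

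The step I expect to require the most care is the structural invariant itself, specifically checking that $A(x)$ is well-defined and unfolding-independent on a genuine DAG: when a node $x$ is the conclusion of $(\rightarrow I)$ discharging $\alpha$, the removal $A(y)\setminus\{\alpha\}$ is correct only because \emph{all} threads through $x$ pass through that single $(\rightarrow I)$ inference, so there is no danger of ``prematurely'' discharging an occurrence of $\alpha$ that belongs to a thread bypassing $x$ — this is guaranteed because $x$ is a single node with a single premise edge in the DAG, and every thread to a leaf above $x$ routes through $y$. One must also be careful that at a merged node (where the second part of the compression, the $(S)$-rule, might have acted) the assignment is still sound; but since the theorem is stated for deductions $\partial$ in \textsc{NM}$_\rightarrow$ proper (no $(S)$), the four clauses above are exhaustive and this subtlety does not arise here. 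Everything else — the two implications and the polynomial-time bound — then follows by routine bookkeeping.
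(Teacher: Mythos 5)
Your proposal is correct: the invariant that $A(x)$ collects exactly the formulas of leaves reachable from $x$ along a thread segment with no discharging $(\rightarrow I)$ is the intended reading of the four clauses, it is well-defined on the dag since each non-leaf node of an \textsc{NM}$_{\rightarrow }$-deduction is the conclusion of a single inference, and both directions of the equivalence plus the routine polynomial-time bound follow as you describe. The paper states this theorem without an explicit proof (treating the verification as simple and deferring details to GH1/GH2), and your argument is essentially that standard one, so there is nothing substantive to add.
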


\section{References}

\quad\ GH1: L. Gordeev, E. H. Haeusler, \emph{Proof Compression and NP
Versus PSPACE}, Studia Logica (107) (1): 55--83 (2019)

GH2: L. Gordeev, E. H. Haeusler, \emph{Proof Compression and NP Versus
PSPACE II}, Bulletin of the Section of Logic (49) (3): 213--230 (2020),

http://dx.doi.org/10.18788/0138-0680.2020.16

GH3: L. Gordeev, E. H. Haeusler, \emph{Proof Compression and NP Versus
PSPACE II: Addendum}, Bulletin of the Section of Logic (51), 9 pp. (2022)

http://dx.doi.org/10.18788/0138-0680.2022.01

Hudelmaier, \emph{An }$O\left( n\log n\right) $\emph{-space decision
procedure for intuitionistic propositional logic}, J. Logic Computat. (3):
1--13 (1993)

D. Prawitz, \textbf{Natural deduction: a proof-theoretical study}. Almqvist
\& Wiksell, 1965

V. Sopin, \emph{PH=PSPACE}, https://arxiv.org/abs/1411.0628

\end{document}